\documentclass[reprint, superscriptaddress, amsmath,amssymb, aps, prx, longbibliography, floatfix]{revtex4-2}
 
\UseRawInputEncoding

\usepackage{graphicx}
\usepackage{dcolumn}
\usepackage{bm}
\usepackage{braket}
\usepackage{physics}
\usepackage{xcolor}
\usepackage{comment}
\usepackage{amsthm}
\usepackage{soul}
\usepackage[utf8]{inputenc}

\newcommand{\cE}{\mathcal{E}}
\newcommand{\cF}{\mathcal{F}}

\newcommand{\cO}{\mathcal{O}}

\newcommand{\cR}{\mathcal{R}}

\definecolor{mypurple}{HTML}{7c4a6d} 
\definecolor{myred}{HTML}{4487ab} 
\definecolor{myblue}{HTML}{007A87} 
\definecolor{flamingo}{HTML}{c81455}
\definecolor{mydarkpurple}{HTML}{351c75}
\definecolor{royalblue}{HTML}{3791ff}
\definecolor{wine_stain}{rgb}{0.5,0,0}

\usepackage{hyperref}
\hypersetup{
    unicode = {true},
    colorlinks={true},
    linkcolor = {royalblue},
    filecolor = {violet},      
    urlcolor = {myblue},
    citecolor = {magenta},
    pdftitle = {Demonstrating Noise-adapted Quantum Error Correction With Break-Even Performance},
    pdfpagemode = {FullScreen},
    pdfauthor = {VAS},
    linktoc = {all},
    linktocpage = {true}
    }

\DeclareMathAlphabet{\mathbb}{U}{bbold}{m}{n}  
\global\long\def\idmatrix{\mathbb{1}}%

\newtheorem{lemma}{Lemma}

\begin{document}


\title{Demonstrating Noise-adapted Quantum Error Correction with Break-Even Performance}

\author{Vismay Joshi}
\thanks{Corresponding author: vismay@physics.iitm.ac.in}
\affiliation{Department of Physics, Indian Institute of Technology Madras, Chennai - 600036, India.}
 
\author{Anubhab Rudra}%
\affiliation{Department of Physics, Indian Institute of Technology Madras, Chennai - 600036, India.}

\author{Sourav Dutta}
\affiliation{Department of Physics, Indian Institute of Technology Madras, Chennai - 600036, India.}

\author{Siddharth Dhomkar}
\affiliation{Department of Physics, Indian Institute of Technology Madras, Chennai - 600036, India.}
\affiliation{Center for Quantum Information, Communication and Computing, IIT Madras, Chennai - 600036, India.}
\affiliation{London Center for Nanotechnology, University College London, London WC1H0AH, United Kingdom.}  

\author{Prabha Mandayam}
\affiliation{Department of Physics, Indian Institute of Technology Madras, Chennai - 600036, India.}
\affiliation{Center for Quantum Information, Communication and Computing, IIT Madras, Chennai - 600036, India.}


\begin{abstract}
The promise of quantum computing is closer to reality today than ever before, thanks to rapid progress in the development of quantum hardware. Even as qubit lifetimes and gate fidelities continue to improve, realizing robust, fault-tolerant quantum computers is contingent upon the successful implementation of quantum error correction (QEC). Conventional QEC schemes have rather high resource overheads and low threshold requirements, making them challenging to implement on present day hardware. Here, we use a recently developed noise-adapted $3$-qubit QEC scheme to demonstrate break-even performance against native amplitude-damping (AD) noise on IBM quantum hardware. We use variational quantum circuits to construct hardware-efficient encoding and decoding circuits. This scheme is probabilistic due to the non-unitary nature of the recovery operators, which are implemented via the block-encoding technique. We demonstrate logical qubit lifetimes exceeding those of the physical qubits by performing multiple rounds of QEC. To further protect the qubits from dephasing due to crosstalk, we incorporate dynamical decoupling into our noise-adapted QEC scheme in a seamless fashion. To account for the post-selection overhead, we define a measure of \textit{gain}, that allows for faithful performance benchmarking of the protocol. Our analysis suggests that the performance of our protocol is limited primarily by the measurement readout fidelity, and is bound to improve with successive generations of quantum processors.
\end{abstract}

\maketitle

\section{\label{sec:level1} Introduction}

Quantum processors have the potential to speed up complex computational tasks, leading to an exponential advantage in certain cases~\cite{google2019, morvan2024phase}. However, the quantum processors of today are noisy, with qubits that are inherently fragile and error-prone. Moreover, faulty quantum gates and measurement read-out processes are still far from ideal. The high levels of noise in current implementations is a crucial factor that prevents such noisy quantum devices~\cite{Preskill_2018} from reaching their full potential. One of the central challenges that theorists and experimentalists alike are grappling with today, is indeed the question of how to transition from the small, noisy quantum computing devices of today, to robust and scalable quantum computers~\cite{eisert2025, quantum_nogo}.



Quantum error correction (QEC)~\cite{LidarBrun_2013, Terhal_2015,  Campbell_2017} offers one such strategy to overcome the effects of noise on quantum bits, by encoding information into \emph{logical qubits} that are entangled states of the bare, \emph{physical qubits}. The prevailing QEC schemes can be broadly classified into (a) general-purpose schemes that can correct for arbitrary noise, and, (b) noise-adapted schemes~\cite{jayashankar2023} that are tailored to correct for specific noise processes. The well known stabilizer codes \cite{gottesman1997}, including topological codes like the surface codes~\cite{fowler2012surface}, fall under the former category. While the surface codes are attractive due to their planar structure and high error thresholds, they are rather resource intensive, with the shortest surface code requiring $17$ physical qubits to realise a single logical qubit~\cite{zhao2022realization}. 

Unlike the conventional QEC approach which aims to correct for independently occurring \emph{Pauli errors}, noise-adapted QEC schemes address the dominant noise process affecting the physical hardware. The framework of noise-adapted QEC therefore admits shorter codes that do not necessarily fit within the stabilizer framework, such as the $4$-qubit~\cite{Leung} and $4$-qudit codes~\cite{Dutta_qudit, AD_2} tailored to correct for amplitude-damping noise, or, cat codes~\cite{mirrahimi2014dynamically} and bosonic codes~\cite{Michael2016} that primarily correct for photon loss. Such codes also fall within the purview of \emph{approximate} QEC~\cite{beny, prabha}, which relaxes the rigid mathematical constraints of the well known Knill-Laflamme conditions for perfect QEC~\cite{KLCondition}. Another key point of departure is that noise-adapted schemes often require a non-trivial recovery operation~\cite{Fletcher, Debjyoti_NARC}, unlike for example, stabilizer codes where the syndrome extraction step is simply followed by a Pauli recovery operation. 

More recently, it was demonstrated that there exists a $3$-qubit code correcting for single-qubit amplitude-damping (AD) noise~\cite{Dutta_2024}, that uses the idea of probabilistic QEC~\cite{nayak2006, alhejji, pQEC2, pQEC3, pQEC4}. This code engenders a further relaxation of the QEC framework, where the recovery is implemented via post-selection. The probabilistic QEC scheme in~\cite{Dutta_2024} was shown to achieve a fidelity that is higher than all known codes for amplitude-damping noise, including the $4$-qubit Leung code and the $5$-qubit stabilizer code \cite{gottesman1997}.

Keeping pace with the theoretical progress in QEC, the past few years have witnessed several experimental demonstrations of QEC schemes on the current generation of noisy intermediate-scale quantum (NISQ) devices, often showing break-even performance. The first such demonstration of a logical qubit whose lifetime exceeded the physical $T_1$ lifetime was achieved using cat codes in superconducting cavities~\cite{Ofek_2016}. Since then, multiple hardware implementations of QEC codes have reported logical qubits with lifetimes surpassing the bare qubit $T_{1}$, including binomial codes~\cite{Ni_2023, Hu_2019}, trapped-ion logical qubits using the $[[7,1,3]]$ code~\cite{Paetznick_2024}, qubit and qudit GKP codes on superconducting qubits~\cite{Sivak_2023, Brock2025}, and, large-scale surface-code implementations~\cite{Google_2024}. 

In this work, we implement the $3$-qubit probabilistic quantum code on superconducting qubits provided by IBM Quantum, to demonstrate break-even performance in the logical qubit lifetime.  We construct explicit circuits for encoding and recovery using variational quantum circuits \cite{Cerezo2021}. 
An important feature of our implementation is that it combines a noise-adapted QEC scheme tailored for AD noise, with a dynamical decoupling (DD)~\cite{Viola_1999, Khodjasteh_2005, Uhrig_2007, Yang_2010} scheme in order to mitigate the effects of dephasing noise. Specifically, we make use of the recently developed chromatic Hadamard dynamical decoupling (CHaDD) scheme \cite{CHaDD} which schedules DD pulses sequentially by using Hadamard matrices, on arbitrary topologies, to mitigate dephasing noise and crosstalk on nearest neighbour qubits \cite{Buterakos_2018_Crosstalk_DD_Semiconductor, Tripathi_2022_Crosstalk_DD_Superconducting}.





We perform multiple cycles of QEC (multi-QEC) to preserve the logical $\ket{0_{L}}$, $\ket{1_{L}}$, and $\ket{+_{L}}$ states, thus providing a proof-of-principle demonstration of the ability to protect any arbitrary state with our scheme. For the logical $\ket{+_{L}}$, we concatenate multi-QEC with CHaDD so as to suppress the detrimental crosstalk noise. Since our scheme requires only $5$ physical qubits in total for a single logical qubit, we are able to demonstrate multiple logical qubits with lifetimes exceeding physical qubits, simultaneously on the same quantum processor.  

Finally, we provide a metric to quantitatively compare the performance of our probabilistic QEC scheme vis-a-vis the bare physical qubit lifetimes and illustrate that our protocol indeed offers a visible advantage in certain regimes. Owing to the post-selection of the successful instances inherent to the proposed protocol,  measurement errors emerge as the primary factor limiting the achievable gain.

The rest of the manuscript is organized as follows. In section \ref{sec:level2}, we describe the $3$-qubit code and the circuit designs for the encoding, syndrome extraction and the recovery. In section \ref{sec:level3}, we discuss the results obtained from simulations and the implementation on actual hardware. In Section \ref{sec:level4}, we discuss the implications of our results and outline prospective directions enabled by advancements in hardware. 

\section{\label{sec:level2} Towards Physical Realization of Probabilistic Error Correction}
\subsection{The $3$-qubit code for amplitude-damping noise}\label{sec:code}


A qubit, starting in the excited state, naturally relaxes to the ground state while emitting a photon. This physically motivated loss mechanism, called amplitude-damping (AD) noise ~\cite{AD_1, AD_2} channel, arises via a Jaynes--Cummings interaction~\cite{JC_interaction} between a qubit and its bath. The action of the AD noise channel $\mathcal{A}$ on a single qubit can be described in terms of two \emph{Kraus operators}, labeled $A_0$ and $A_1$, which correspond to the no-damping error and the single-qubit damping error respectively, as described below.
\begin{align}
    A_0 = \ket{0}\bra{0} + \sqrt{1-\gamma} \ket{1}\bra{1}, \qquad A_1 = \sqrt{\gamma} \ket{0}\bra{1}.
    \label{eq:kraus}
\end{align}
Unlike unitary Pauli noise, AD is inherently asymmetric and non-unitary: excitations can be lost, but they cannot be spontaneously regenerated. The error probability or noise strength $\gamma$ can be empirically modeled as a function of time as $ \gamma(t) = 1 - e^{-\frac{t}{T_{1}}}$, where the time-constant $T_{1}$ is the characteristic lifetime or the \emph{relaxation time} of the qubit. 
Such a noise model is relevant for systems with bosonic cavities, such as trapped ions and superconducting qubits \cite{Astafiev_2004_Josephson_Quantum_Noise, Rob_2002_Qubits_Spectrometers_noise, Krantz_2019_quantum_engineer, Tomita_2014_Surface_realistic_noise, Gicev_2024_error_structure}.

We consider the $3$-qubit code introduced in ~\cite{Dutta_2024}, tailored to protect against single-qubit AD noise. This code encodes a single logical qubit into fixed excitation, permutation invariant states, also known as the Dicke states \cite{Dicke}. The logical states (codewords) are given by,
\begin{equation}\label{eq:three_qubit_code}
    \ket{0_L} = \frac{1}{\sqrt{3}}(\ket{100}+\ket{010}+\ket{001}) ~~\text{and} ~~ \ket{1_L} = \ket{111}
\end{equation}
It can be easily checked that the no-damping and single-damping errors map the codewords in Eq.~\eqref{eq:three_qubit_code} to orthogonal subspaces, thus enabling us to distinguish between the two cases. However, this code does not allow for uniquely identifying the qubit that experienced a single damping event, marking a deviation from the standard QEC approach for syndrome extraction.

The syndrome extraction step involves a single $Z_{1} Z_{2} Z_{3}$ measurement, where $Z_{i}$ is the Pauli $Z$ on the $i^{\rm th}$ qubit. Although this code does not have a complete
set of stabilizer generators, the operator $-Z_{1} Z_{2} Z_{3}$ stabilizes the codewords. 
If the outcome of the $Z_{1} Z_{2} Z_{3}$ measurement is $+1$, one can conclude that a single damping error has occurred, whereas a $-1$ outcome indicates the occurrence of the no-damping error. Upon successful detection of the error syndrome, a carefully designed non-unitary recovery map $\cR$ restores the logical state with high fidelity.

\begin{figure*}[t]
    \centering
    \includegraphics[width=1.0\linewidth]{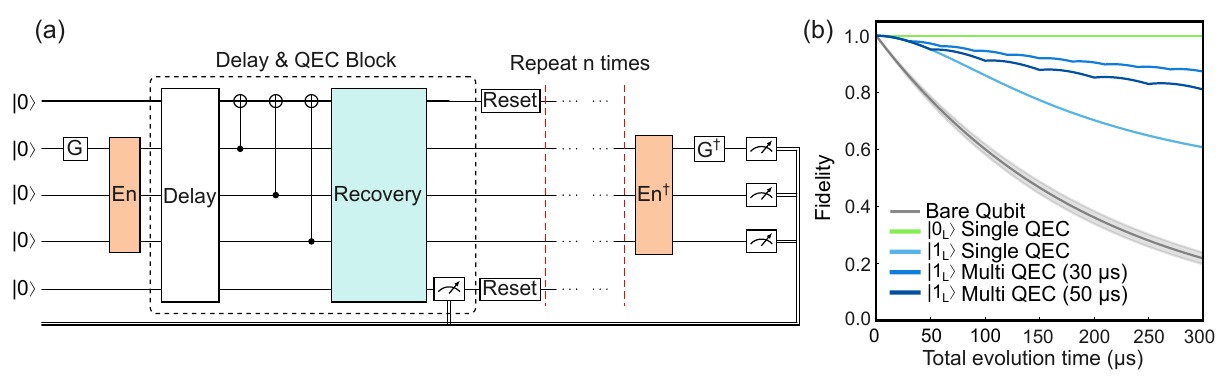}
    \caption{
    {\bf (a) Schematic circuit for multi-cycle QEC.} The unitary $G$ and the encoder {\bf($\textrm{En}$)} prepare an arbitrary logical state, that undergoes both damping and dephasing during the free evolution {\bf (Delay)}. Syndrome extraction and recovery each require an additional ancilla. The delay \& QEC block is repeated for $n$ rounds, resetting the ancilla each time. Finally, $(\textrm{En})^{\dagger}$ and $G^{\dagger}$ are applied, followed by a measurement to estimate the fidelity. {\bf (b) Fidelity of the physical and logical qubits as a function of time.} The total evolution time is the sum of the delay and QEC protocol runtimes. The grey line (background) is the average $T_{1}$ ($T_{1}$ spread). The green (blue) line is for $\ket{0_{L}}$ ($\ket{1_{L}}$). Multi-QEC is plotted for $\ket{1_{L}}$ with maximum allowed delays of $30 \mu s$ and $50 \mu s$ between consecutive QEC cycles.}
    \label{fig:1}
\end{figure*}

The performance of this QEC protocol for an arbitrary logical state $\ket{\psi_L} = \cos{\frac{\theta}{2}} \ket{0_L} + e^{i\phi} \sin{\frac{\theta}{2}} \ket{1_L}$ can be benchmarked using the fidelity function
$\cF_{\ket{\psi_L}} = \matrixel{\psi_L}{(\cR\circ\cE)(|\psi_{L}\rangle\langle\psi_{L})}{\psi_L}$, which evaluates to~\cite{Dutta_2024}, 
\begin{equation}\label{eq:statefid}
    F_{\ket{\psi_L}} = \frac{1+\gamma^2 \sin^2{(\frac{\theta}{2})} \cos^2{(\frac{\theta}{2})} }{1+\gamma^2 \sin^2{(\frac{\theta}{2})}}.
\end{equation}
For a given noise strength $\gamma$, the fidelity is minimum for the state $\ket{1_L}$ (when $\theta = \pi$), thus yielding a \emph{worst-case fidelity} of,
\begin{equation*}
    F_{\text{worst-case}} = \frac{1}{1+\gamma^2} = 1 - \gamma^2 + \mathcal{O}(\gamma^3).
\end{equation*}
The recovery channel $\cR$ that achieves this fidelity is a trace non-increasing map described by two operators, corresponding to the no-damping and single-damping errors respectively~\cite{Dutta_2024}.
\begin{eqnarray}
    R_0 &=& (1-\gamma) \dyad{0_L} + \dyad{1_L}, \label{eq:recovery} \\
    R_1 &=& (1-\gamma) \dyad{0_L}{000} + \frac{1}{\sqrt{3}} \ket{1_L}(\bra{011} + \bra{101} + \bra{110}). \nonumber
\end{eqnarray}
The recovery is indeed noise-adapted with both operators carrying an explicit dependence on the noise strength $\gamma$. Using known circuit constructions for implementing a non-unitary operators~\cite{non_unitary}, it was shown in~\cite{Dutta_2024} that the recovery in Eq.~\eqref{eq:recovery} can be implemented using a post-selected scheme. The non-unitary nature of the recovery makes this a probabilistic QEC scheme, with a probability of success given by,
\begin{align}\label{eq:prob_succ}
    p_{success} = (1-\gamma)^2(1-\gamma^2 \sin^2{\frac{\theta}{2}}).
\end{align}


\subsection{The $3$-qubit code with dephasing and damping}\label{sec:dephasing}

In superconducting qubits, AD is accompanied by dephasing noise as well. The single-qubit pure dephasing channel is composed of the Kraus operators $\{\sqrt{1-p }I, \sqrt{p}Z\}$, where $I$ denotes the identity operator and $Z$ denotes the Pauli operator that is diagonal in the computational basis. Here, $p = \frac{1}{2}(1 - \exp(-t/T_{\phi}))$ is the probability of a dephasing error and $T_{\phi}$ denotes the \emph{dephasing lifetime}. A combination of $T_{1}$ and $T_{\phi}$ gives phase-damping or $T_{2}$ noise.  

Interestingly, the recovery channel $\cR$ protects the codewords from the first-order dephasing noise, so their fidelities under the action of the combined noise channel and recovery, are given by $F_{\ket{0_{L}}} = 1$ and $F_{\ket{1_{L}}} = 1 - \gamma^{2} + \cO(\gamma^{3})$.
However, for superposed logical states, the fidelity drops further when one includes the dephasing noise along with AD, as seen in Fig.~\ref{fig:1}. Specifically, the $3$-qubit QEC scheme protects the logical states $|+_{L}\rangle$ and $|-_{L}\rangle$ with fidelities of 
\begin{eqnarray}
&& F_{\ket{+_{L}}} (\gamma, p) \; = \;F_{\ket{-_{L}}} (\gamma, p) \nonumber \\
&=&
1-\frac{(4 + \gamma) p}{3}-\frac{4 p^2}{9} - \frac{5\gamma^2}{16} + \text{h.o.}
\end{eqnarray}
Here, h.o. stands for higher order terms. The presence of dephasing also affects the expression for the success probability of implementing the post-selected recovery. The analytical expressions for arbitrary state fidelity and  success probability achieved by the $3$-qubit QEC scheme in the presence of both damping and dephasing is given in Appendix \ref{sec: appA}. 



\begin{figure*}[t]
    \centering
    \includegraphics[width=1.0\linewidth]{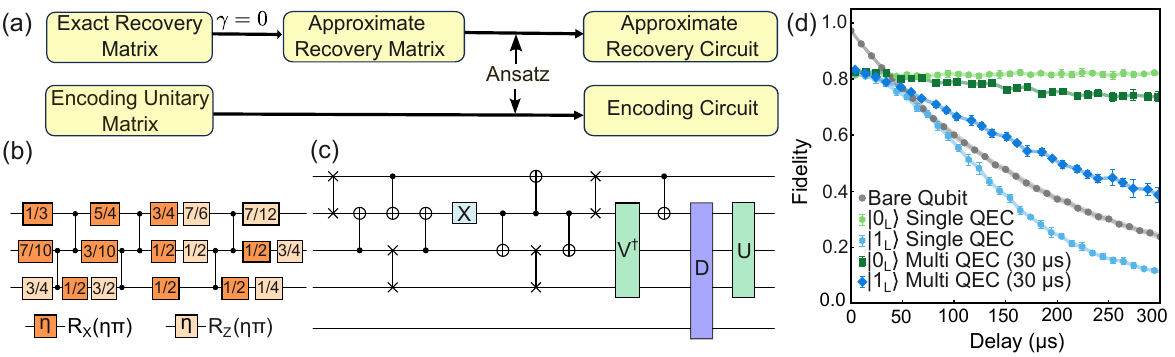}
    \caption{\textbf{The multi-QEC circuit design}. (a) \textbf{VQC and circuit design process flowchart}. We approximate the recovery by fixing  $\gamma = 0$. We use an ansatz and run a classical optimization routine to find the optimal hardware-efficient circuit decomposition. (b) \textbf{The encoder circuit}. The ansatz for all our QEC components has a similar structure as seen in the decomposed encoding circuit. $R_X(\theta)$ is denoted in orange, and $R_Z(\theta)$ in peach. The displayed value in the box $\eta$ represents the optimized rotational angle as multiples of $\pi$. (c) \textbf{The recovery circuit}. The syndrome measurement circuit ends before the $X$ gate in the subfigure. The explicit circuits are given in the Appendix \ref{sec: appB}. (d) \textbf{Approximate recovery performance}. The plot shows the simulated fidelity obtained with the approximate recovery, under $T_{1}$, $T_{2}$, gate, and readout noise. The grey dots give the average $T_{1}$. The light (dark) green (blue) points show the performance of the single (multi) QEC for $\ket{0_{L}}$ ($\ket{1_{L}}$).}
    \label{fig:2}
\end{figure*}  

\subsection{The multi-cycle QEC protocol}\label{sec:multi}

The schematic circuit 
for a single round of QEC using the $3$-qubit noise-adapted code is shown in Fig.\ref{fig:1}(a). The scheme requires only one  ancillary qubit in principle, apart from the three data qubits, which maybe reset and reused after syndrome extraction step to implement the post-selected recovery. Our implementation however makes use of two ancillary qubits, one each for syndrome extraction and the recovery respectively, for ease of implementation on physical hardware, as shown in Fig.~\ref{fig:1}. We first prepare an arbitrary single-qubit quantum state 
using the unitary $G$, and encode it in the codewords of the $3$-qubit code using the encoding circuit ($\textrm{En}$). We then allow for free evolution of the encoded qubits and ancilla qubits, denoted as (${\rm Delay}$) in Fig.\ref{fig:1}(a), during which time the qubits may undergo both damping and dephasing. Note that, in our protocol, {\rm Delay} refers to the free evolution of the qubits without any artificial injection of noise. This is followed by the $Z_{1} Z_{2} Z_{3}$ syndrome measurement using one ancillary qubit and the probabilistic recovery circuit which uses one additional ancillary qubit. The encoding and recovery circuits are described in detail in Sec.~\ref{sec:circuits}.

A single round of QEC after a long computation or delay, may not be able to restore the state properly. Hence, multiple rounds of QEC are performed at regular intervals to continually detect and correct errors before they accumulate beyond the error correction capability of the code. Thus, we repeat the delay and QEC block for a certain number ($n$) of rounds, depending on the total runtime of the circuit and the noise strength. 


The fidelity is measured at the end by applying $({\rm En})^{\dagger}$ and $G^{\dagger}$ and measuring the three encoded qubits as well as the ancilla used to perform the recovery, in the computational basis. We store the value of the ancilla after each round to post-select all the successful instances of the recovery. The probability of the all-zero outcome directly gives the fidelity. A detailed analysis of why the all-zero outcome corresponds to the state fidelity is given in the Appendix \ref{Appendix: fidelity}.

The plots in Fig.\ref{fig:1}(b) show how the fidelities of the physical and logical qubits drop off as a function of the delay. The fidelities have been calculated under an idealized setting, where the qubits are subject to $T_{1}$ and $T_{2}$ noise, but the hardware is assumed to be free of gate and measurement errors. The total evolution time in Fig.~\ref{fig:1} (b) includes the time taken for encoding, $n$ rounds of the delay and QEC block, followed by the inverse of the encoding. During the multi-QEC protocol, the total free evolution is the sum of the delays in the $n$ rounds. 

For each such round, the delay does not exceed the pre-specified maximum allowed value. In Fig.~\ref{fig:1} (b), we plot the fidelities for two delay values, $30 \mu s$ and $50 \mu s$, respectively. For example, if the total free evolution time is $40\mu s$, with the maximum delay of $30 \mu s$, then there will be two rounds of QEC. In the first round, the delay is $30 \mu s$, and $10 \mu s$ in the second. Our encoding circuit has a runtime of about $0.548 \mu s$, and the recovery circuit requires about $3.072 \mu s$. So, the total evolution time in this case would be $47.24 \mu s$. The relevant gate and reset times have been taken from IBM Quantum devices. The reset of the ancilla, requiring $2.72 \mu s$, occurs during the delay of the logical register before the start of the next QEC round, which adds no additional time to the full evolution.

\subsection{Encoding and Recovery Circuits}\label{sec:circuits}

We next present the details of the encoding, syndrome extraction and recovery circuits for our QEC protocol. We use a variational quantum circuits (VQC) approach~\cite{VQE-1_Peruzzo2014, VQE-2_Cerezo2021, McClean2016, Cao_2022} to efficiently generate low-depth quantum circuits for the encoding and recovery operations. A flowchart highlighting the important steps in the assembly of the circuit designs is given in Fig.~\ref{fig:2} (a). 

The codewords of the $3$-qubit code are indeed easy to prepare, with the $|0_{L}\rangle$ being the well known $W$-state \cite{W-Dur_W_state} and the $|1_{L}\rangle$ being a simple product state. While there exist standard circuit preparations for the $W$-state \cite{W-state_expt, W_state_IBM}, our VQC-based circuit allows for the preparation of arbitrary logical states, while taking into account the hardware's native gate set as well as connectivity constraints.

For obtaining the hardware-optimized encoding circuit, we first choose an ansatz -- a parametrized circuit --  similar to the structure of Fig. (\ref{fig:2}b), with alternating layers of single-qubit $R_{X}(\theta)$, $R_{Z}(\theta)$ gates, and two-qubit $CZ$ gates acting between neighboring qubits. The gate $R_{\sigma_j}(\theta) = e^{i\sigma_j\theta/2}$ rotates the state of the qubit by $\theta$ angle around the $\sigma_j$-axis of the Bloch sphere, where $\sigma_j$ can be the Pauli matrices $X,Y$ or $Z$. The single-qubit rotation angles $\theta$ are the parameters we wish to optimize for, such that we end up with a circuit that is closest to the required encoding operation. We now define an associated cost function, $C(\boldsymbol{\vec{\theta}})$, which is a function of the vector $\boldsymbol{\vec{\theta}}$ of rotation angles used in our ansatz. 
The parameters are updated iteratively using classical optimization tools, given by SciPy \cite{2020SciPy-NMeth}, in order to minimize $C(\boldsymbol{\vec{\theta}})$.

We denote our ansatz as $U(\boldsymbol{\vec{\theta}})$ and the unitary matrix which we want to implement as $O$. In cases where only a part of matrix $O$ is known or only certain elements of $O$ are of interest, we can reduce the complexity of $C(\boldsymbol{\vec{\theta}})$ by summing over only relevant matrix elements in the expression below. 

\begin{align}\label{eq:8}
    C(\boldsymbol{\vec{\theta}}) = \sum_{i,j} |O_{ij} - U_{ij}(\boldsymbol{\vec{\theta}})|^2
\end{align}
In our encoding circuit, only the first two columns of the matrix $O$ are important. We will use only those two columns to optimize our choice of ansatz in order to get the final encoding circuit, given in Fig. \ref{fig:2} (b).

We implement the recovery map for the $3$-qubit code by obtaining circuit implementations of the non-unitary operators in Eq.~\eqref{eq:recovery}. In general, non-unitary operators can be implemented by either performing polar decomposition (PD)~\cite{non_unitary, Dutta_2024} or singular value decomposition (SVD)~\cite{non_unitary}. These decompositions provide an elementary sequence consisting of a unitary part and a non-unitary part which is implemented via the technique of block-encoding~\cite{Low2017}. The $3$-qubit non-unitary operators are implemented as a $4$-qubit unitary operation, using an extra ancilla qubit, followed by a measurement of the ancilla qubit. We post-select over the favorable outcomes of the ancilla and discard the rest, thus making this a probabilistic recovery process. 


\begin{figure}[t!]
    \centering
    \includegraphics[width=0.95\linewidth]{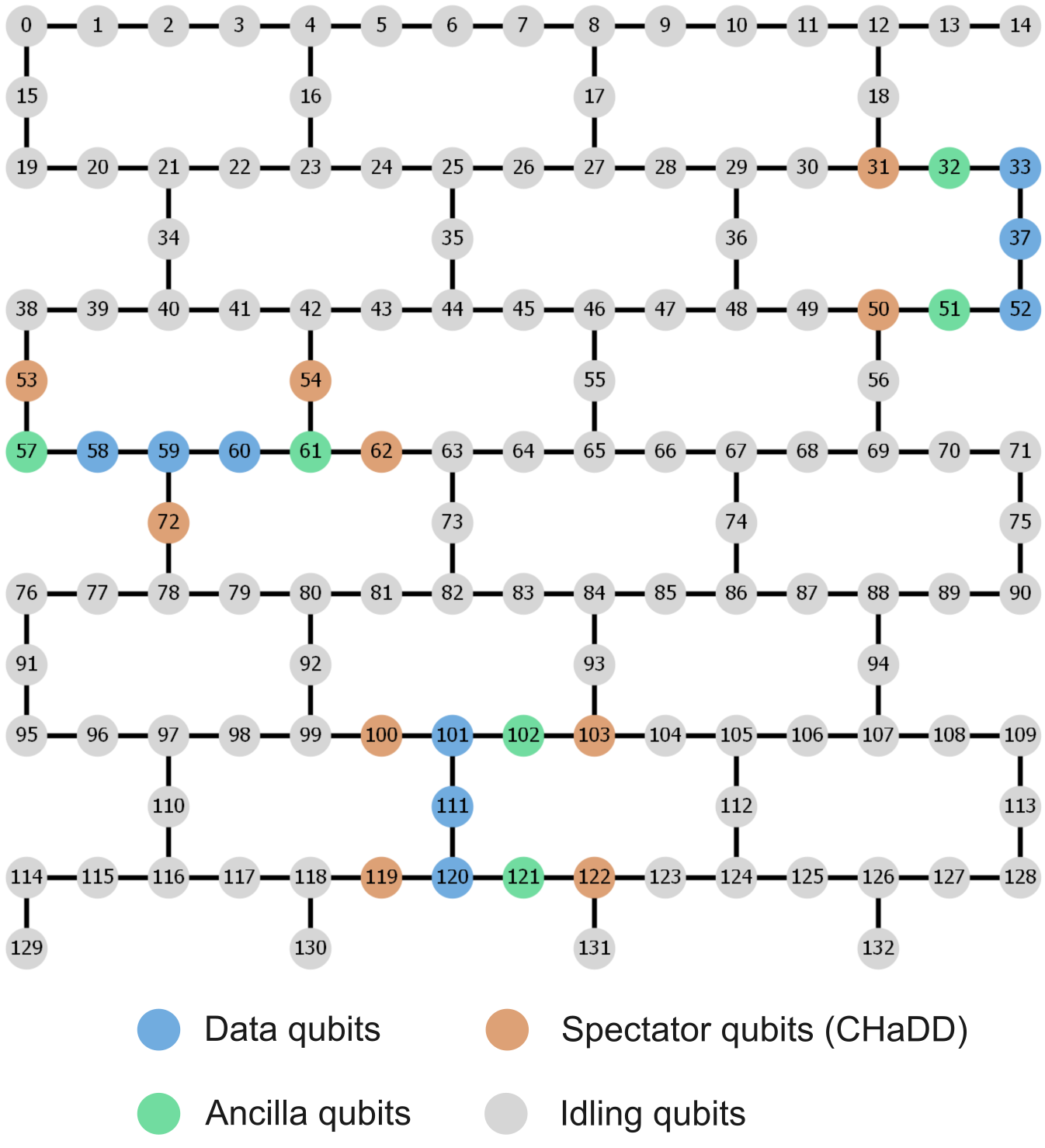}
    \caption{\textbf{Layout of the IBM Torino device}. Data qubits forming the logical qubit are shown in blue, ancilla qubits used for syndrome extraction and post-selection in green, and spectator qubits (nearest neighbour of the data qubits and the ancilla qubits) participating in the CHaDD sequence in orange. Idle qubits are marked in gray.}
    \label{fig:torino}
\end{figure}

For our recovery implementation, we choose the SVD route, since it has a distinct advantage over the PD approach. PD breaks the recovery into two parts, whereas the SVD breaks it into three parts. It is easier for VQC to optimize for smaller chunks or subroutines. Furthermore, in SVD, we have a diagonal part, which symbolizes more symmetry than PD. Recall that for a general matrix $R$, the singular value decomposition is of the form $R = UDV^{\dagger}$. 
Here, $D$ being the diagonal matrix of singular values, gives rise to the non-unitary part; the unitary $U$ ($V$) forms the eigenvector matrix of $RR^{\dagger}$ ($R^{\dagger}R$).

For the two recovery operators $\{R_{0}, R_{1}\}$ in Eq.~\eqref{eq:recovery}, the corresponding $U$ and $D$ are identical, that is, $U_{0} = U_{1}$ and $D_{0} = D_{1}$. For $R_{0}$, $V_{0} = U_{0}$, and for $R_{1}$, $V_{1} = X_{1}X_{2}X_{3}U_{1}X_{1}$, where $X_{i}$ is the Pauli $X$ operator on the qubit $i$. The qubit numbering order is the same as in Fig. (\ref{fig:1}a), where the topmost qubit begins with index $0$. 

\begin{figure*}[t!]
    \centering
    \includegraphics[width=1.0\linewidth]{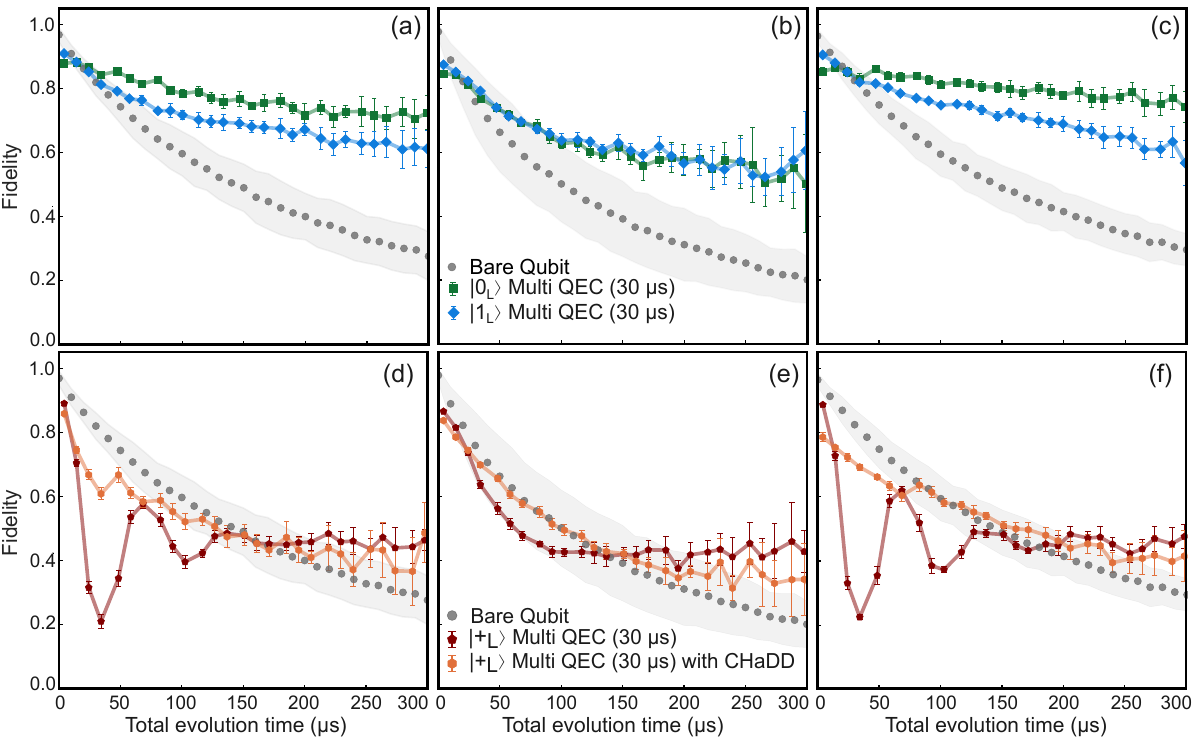}
    \caption{\textbf{Hardware experiment results of multi-QEC on IBM Torino}. Top three figures (a), (b), (c) show the multi-QEC logical qubit fidelity for the $\ket{0_{L}}$ and $\ket{1_{L}}$ states in comparison with the bare qubit. The grey dots (background) represent the average $T_{1}$ ($T_{1}$ spread) of the bare qubits of that set. The green (blue) squares (diamonds) emphasize the logical $\ket{0_{L}}$ ($\ket{1_{L}}$) fidelity.
    Bottom three figures  (d), (e), (f) show the multi-QEC logical qubit fidelity for the $\ket{+_{L}}$ state in comparison with the bare qubit, both with and without the CHaDD protocol. The grey dots (background) represent the average $T_{1}$ ($T_{1}$ spread) of the bare qubits of that set. The maroon (orange) pentagons (hexagons) show the fidelity of the logical $\ket{+_{L}}$ without CHaDD (with CHaDD).}
    \label{fig:3}
\end{figure*}

Having performed SVD, we now use VQC to find the circuits for only $U$ and $D$. To decompose $U$ and $V^\dagger$, we will only focus on the columns of $U$ and $V$ that correspond to the non-zero entries of $D$. The non-unitary diagonal operator $D$ can be realized using $C_{i}R_Y(\theta)$ gates, which are multi-controlled $R_{Y}(\theta)$ gates, where $i$ denotes the control on $i$ qubits. In this construction, the ancilla qubit introduced for the block encoding serves as the target, while the data qubits act as controls. Since the non-zero diagonal elements of $D$ are $1$ and $1 - \gamma$, two $C_{3}R_{Y}$ gates are needed for exact implementation. However, the decomposition of such gates requires more than $60$ $CZ$ gates \cite{cnot_scaling_Rakyta2022}, leading to long execution times and significant error accumulation. 

We apply $R_{0}$ or $R_{1}$ by applying syndrome-dependent control operations. One can either measure the syndrome qubit and apply the corresponding recovery based on the output via classical feedback or directly use local quantum controlled operations, as in our case.For, the classical-feedforward based quantum circuit please refer to the GitHub in the Data Availability section.

To reduce overhead, we approximate $D$, by setting $\gamma = 0$. The fidelity with this approximate recovery under the joint action of AD and dephasing noise is,
\begin{align}
    & F_{\ket{0_{L}}}= 1; \; F_{\ket{1_{L}}} =  1 - \gamma^{2} + \text{h.o.}; \nonumber\\
    & F_{\ket{+_{L}}} = 1-\frac{4 p}{3} - \frac{4 p^2}{9} - \frac{\gamma^{2}}{2} + \text{h.o.} 
\end{align}
For $p = 0$, the approximate recovery still accounts for all first-order damping. The implementation is now simpler due to $\gamma$-independence. In this case, both diagonal elements become $1$, and the operation reduces to a single $C_{2}R_{Y}(\pi)$ gate. This gate can be efficiently implemented using a `Margolus-type' ansatz \cite{Margolus}, requiring only $5$ $CZ$ gates. The explicit circuits for $U$ and $D$ are provided in the Appendix \ref{sec: appB}.  

The interval between two QEC cycles is chosen so that $\gamma$ remains small, ensuring the validity of the approximation used in the recovery operation. An analysis of the optimal maximum allowed delay is made in Appendix \ref{appendix:C}. We have observed that $30 \mu s$, corresponding to $\gamma \approx 0.14$ for a qubit with $T_{1} \simeq 200 \mu s$ is the required sweet spot for IBM's qubits.

\section{\label{sec:level3} Experimental Results \& Discussion}
We now present the results of our implementation of the $3$-qubit QEC protocol on IBM Quantum hardware. We demonstrate break even performance for three different sets of logical qubits on the IBM Torino device~\cite{IBM_Torino}, the layout of which has been shown in Fig.~\ref{fig:torino}. We present the data for three different sets of logical qubits on which the multi-QEC protocol outlined in Sec.~\ref{sec:multi} was implemented. The experiment was carried out simultaneously on the different sets.

\subsection{Demonstration of logical $|0_{L}\rangle$ and $|1_{L}\rangle$ on the hardware}

The multi-QEC protocol executed here has a maximum delay $30 \mu s$ between two QEC cycles. Fig. \ref{fig:3} (a, b, c) shows the performance of the protocol, in terms of the fidelity, for the logical states $\ket{0_{L}}$ and $\ket{1_{L}}$ with the total free evolution time. The free evolution time includes the time for each recovery process, which takes $3.072 \mu s$. The bare qubit and logical fidelities do not start from $1$ as there are readout errors present which are of the order of $10^{-2}$. 

The total shots for each data point for the logical qubit fidelity is $52000$, and the corresponding number for the bare qubit $T_{1}$ curve is $2000$. We have observed that on an average the data follows the trend of the noisy simulation, where the multi-QEC fidelities of both $\ket{0_{L}}$ and $\ket{1_{L}}$ are greater than the bare-qubit fidelity. In fact, the logical qubit lifetimes comfortably beat the average bare qubit $T_{1}$ lifetimes of $220 \mu s$, with an estimated value of $740 \mu s$ for the average logical qubit $T_{1}$ over the three sets. The standard deviation of the fidelity for the logical states increases gradually. This is attributed to the fact that the success probability of the scheme decreases very rapidly as seen from the expression given in the Appendix \ref{sec: appA}.

\subsection{Demonstration of logical $|+_{L}\rangle$ on the hardware: the CHaDD protocol}

Fig. \ref{fig:3} (d, e, f) illustrates the fidelity of the $\ket{+_{L}}$ for the aforementioned sets of logical qubits. As our protocol is not meant to protect the coherence or improve the $T_{2}$ lifetimes, the decrement in the logical qubit's fidelity is rapid in the case of $\ket{+_{L}}$. Moreover, the presence of unintentional crosstalk between the qubits manifests itself in the form of visible oscillations. The qubits investigated in \ref{fig:3} (d, f) exhibit relatively high crosstalk frequency, whereas the fidelity in (e) decays to $0.5$ before we can observe the characteristic oscillations due to a much lower crosstalk frequency. The main issue here arises from the nature of the crosstalk, which can either be static or can appear intermittently while the nearby qubits are being driven. However, with the help of the interleaved CHaDD \cite{CHaDD} protocol, we are able to mitigate this problem to a large extent. 

CHaDD is a colour-aware dynamical decoupling scheme which takes into account the chromaticity ($\chi$) of the underlying qubit connectivity graph. The chromaticity refers to the minimum number of colours required to colour the vertices of the graph such that no two adjacent vertices share the same colour. IBM Torino has the heavy-hex architecture with $\chi = 2$, which means we assign two different colours to the adjacent qubits. The sequence of the CHaDD pulse is determined from a `\textit{sign}' matrix which is $H^{\otimes 2}$ for our case; $H$ being the Hadamard matrix.
We arbitrarily choose two rows of the sign matrix, except for the first row, and assign them the qubits' colours.
The columns of the sign matrix, from left to right, represent the time steps of the DD, which helps to schedule the pulses. 
In a column, a positive or a negative entry corresponds to the free evolution of the qubits or applying an $X$ gate on the qubits with the same colour, respectively, which are mapped to the corresponding row. Due to the orthogonality of these rows, the effective DD sequence suppresses the $ZZ$ crosstalk and dephasing up to first order in $\tau$, which is the free evolution period of the qubits.

Here, the gate time is considered to be very short compared to the free evolution period. IBM devices have gate times in the order of tens of $ns$ and free evolution is of the order of $\mu s$, which validates the underlying assumption. In our protocol, we use the robust variant of the above-mentioned single-axis $X$-type CHaDD, with chromaticity $2$. We call a qubit a spectator if it does not participate in the main QEC protocol. In our scheme, we consider all the nearest-neighbour qubits as spectators, as shown in Fig. (\ref{fig:torino}). The DD sequence used in our protocol has the following form:
\begin{align*}
    X_1 f_\tau X_2 f_\tau \widetilde{X}_1 f_\tau \widetilde{X}_2 f_\tau \widetilde{X}_1 f_\tau \widetilde{X}_2 f_\tau X_1 f_\tau X_2
\end{align*}
where $f_\tau$ represents the free evolution unitary of the overall system, for a duration $\tau$, and $X_i$ ($\widetilde{X}_{i}$) represents the $X$ ($R_{X}(-\pi)$) operation being applied to all qubits of colour $i$.


\begin{figure}
    \centering
    \includegraphics[width=1.0\linewidth]{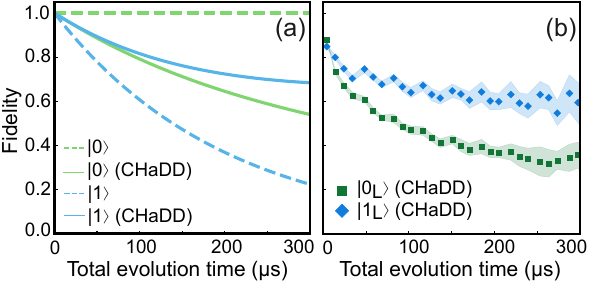}
    \caption{\textbf{Analyzing the effect of CHaDD on logical qubits}. a) \textbf{Toy model simulation}. A toy $2$-qubit model, with $ZZ$ crosstalk interaction, with populations of $\ket{0}$ (green) and $\ket{1}$ (blue) of the probe qubit is shown. The solid line shows the populations with the application of the CHaDD protocol. b) \textbf{ChaDD executed during multi-QEC for the codewords}. Shows the effect of CHaDD on the logical qubit's $\ket{0_{L}}$ and $\ket{1_{L}}$ populations averaged over the same $3$ qubit sets considered previously. The background shaded area is the standard deviation across the different sets.}
    \label{fig:4}
\end{figure}

\begin{figure*}
    \centering
\includegraphics[width=1.0\linewidth]{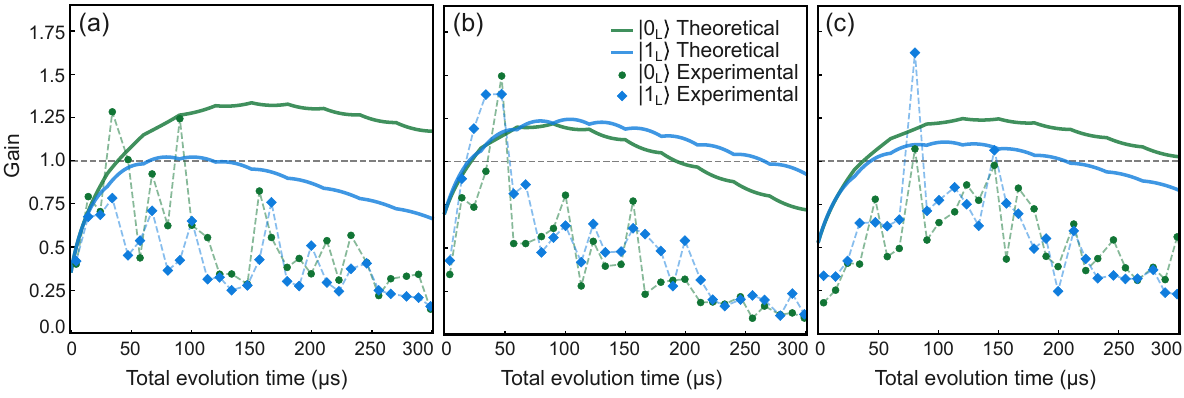}
    \caption{\textbf{Performance gain of the multi-QEC protocol}. In (a), (b), (c) the green (blue) circles (diamonds) represent gain values for the logical $\ket{0_{L}}$ ($\ket{1_{L}}$) state, obtained using Eq. \eqref{eq:gain_expt}. During multi-QEC to protect the codewords we do not use the CHaDD scheme. The corresponding solid lines show the behaviour expected from the theoretical model as given in Eq. \eqref{eq:gain}. It captures the trend of the peak of the experimental gain, but fails in other regimes, which can be attributed to gate errors and crosstalk.}
    \label{fig:gain}
\end{figure*}

\subsection{Analyzing the effect of CHaDD on $\ket{0_L}$ and $\ket{1_L}$}

Simultaneous implementation of QEC and DD to stabilise an arbitrary qubit state has not been explored previously. Although CHaDD proves beneficial while working with $\ket{+_{L}}$ by readily suppressing the crosstalk, it has a non-trivial impact on the of $\ket{0_L}$ and $\ket{1_L}$. In order to analyse the effect of CHaDD on our QEC scheme, we consider a $2$-qubit toy model that includes relaxation, dephasing, and static $ZZ$ crosstalk interaction given by the Hamiltonian: 

\begin{align}
    H/ \hbar = \frac{\omega_{1}}{2} Z_{1} + \frac{\omega_{2}}{2} Z_{2} + g Z_{1}Z_{2}
\end{align}

Here, $\omega_{1}$ and $\omega_{2}$ are the qubit frequencies, $g$ is the coupling strength, and $Z_{i}$ is the Pauli Z on the $i$-th qubit. We perform $4$ independent simulations using the Qutip package \cite{Qutip}; the results are as shown in Fig. \ref{fig:4}(a). We initialise the probe qubit in either $\ket{0}$ or $\ket{1}$, while fixing the initial state of the second qubit to $\ket{0}$. We allow the system to evolve in the presence or absence of the CHaDD pulses, subsequently recording the state fidelity of the probe qubit. It can be seen from Fig. \ref{fig:4}(a) that the fidelity for $\ket{0}$ drops with CHaDD while it improves for $\ket{1}$. Intuitively, this observation can be explained in the following manner. The CHaDD protocol is essentially a sequence of $X$ gates on the two qubits. If one applies an $X$ gate on $\ket{0}$, then the population will be transferred to $\ket{1}$, which is prone to relaxation. Similarly, for $\ket{1}$, if we apply an $X$ gate, then that results in the partial suppression of relaxation. This is essentially a consequence of the inherent asymmetry of the AD channel. 

The toy model presented here faithfully mimics the dynamics of the superconducting qubits, which operate at temperatures of about $10mK$. In such qubits, the damping contribution to $T_{1}$ is far greater than that of the thermal excitations. Thus, the behaviour of the logical qubit, as plotted in Fig. \ref{fig:4}(b), is in very good agreement with the toy model predictions. Here, the traces represent the average value of state fidelity across the three sets of qubits.

\subsection{Quantifying the advantage}

Although we clearly demonstrate that our protocol boosts $T_{1}$ of the logical qubits significantly, the probabilistic nature of the protocol necessitates a more nuanced proxy to quantify the practical advantage. Similar to the performance metrics that are suitable for multi-shot approaches, we develop a measure -- \textit{Gain} -- that relies on the effective signal-to-noise ratio (SNR). Thus, we define experimental \textit{Gain} as follows:

\begin{equation}\label{eq:gain_expt}
    \text{Gain}_{expt} = \frac{\text{SNR}_{\text{QEC}}}{\text{SNR}_{\text{bare}}} = \frac{F_{\text{QEC}} \sqrt{N_{\text{bare}}} \sigma_{\text{bare}}}{F_{\text{bare}}\sqrt{N_{\text{QEC}}} \sigma_{\text{QEC}}}
\end{equation}
Here, ${\text{SNR}_{\text{QEC}}}$ represent SNR corresponding to the QEC protocol, $F_{\mathrm{QEC}}$ denotes the fidelity after successful recovery with QEC, $N_{\mathrm{QEC}}$ is the total number of QEC shots, including the unsuccessful runs, and $\sigma_{\mathrm{QEC}}$ is the standard deviation of the fidelity data. Similarly, $F_{\mathrm{bare}}$ denotes the bare-qubit fidelity extracted from standard $T_1$ measurements, obtained by preparing the qubit in $\ket{1}$. $N_{\mathrm{bare}}$ is the total number of experimental repetitions, and $\sigma_{\mathrm{bare}}$ is the standard deviation of the corresponding $T_1$ data. The evaluated \textit{Gain} is plotted for the three sets in Fig. \ref{fig:gain}.

To model the gain theoretically, we compute the SNR per shot as follows 
\begin{align}\label{eq:th_SNR}
\mathrm{\overline{SNR}}_{\text{bare}} = \frac{F_{\text{bare}}}{\sigma_{\text{bare}}}, ~~ \mathrm{\overline{SNR}}_{\text{QEC}} = \frac{F_{\text{QEC}} \sqrt{p_{\text{success}}}}{\sigma_{\text{QEC}}}
\end{align}
Here, $p_{\mathrm{success}}$ is the probability of successful recovery, and the standard deviations are calculated assuming an underlying binomial distribution realised after qubit measurement. In our calculation, we only account for the measurement error. Thus, we replace the fidelity term $F$ in Eq. \eqref{eq:th_SNR} with $F^* = F(1-E_{\mathrm{meas}}) + (1-F)E_{\mathrm{meas}}$, where $E_{\mathrm{meas}}$ is the average measurement error for the set of investigated qubits, which is of the order of $10^{-2}$, consistent with the data provided by IBM Quantum~\cite{IBM_Torino}.

Hence, the gain of the QEC protocol, obtained theoretically, can be expressed as
\begin{equation}\label{eq:gain}
    \text{Gain}_{th} = \frac{\mathrm{\overline{SNR}}_{\text{QEC}}}{\mathrm{\overline{SNR}}_{\text{bare}}} = \frac{F^*_{\text{QEC}}\sigma_{\text{bare}} }{F^*_{\text{bare}}\sigma_{\text{QEC}}} \sqrt{p_{\text{success}}}
\end{equation}

The theoretically expected gains corresponding to the studied set of qubits are also shown in Fig. \ref{fig:gain}. Owing to a relatively large measurement error, the predicted gain curve goes beyond the break-even mark of 1 only in certain evolution regimes. Furthermore, the model indeed captures the overall trend of the experimental data, though unaccounted gate errors and crosstalk result in excessive deterioration observed in the actual dataset. Nonetheless, being able to observe near-unity gain after implementing a noise-adapted QEC protocol across multiple sets of real-world qubits is a noteworthy achievement.  

\begin{figure}
    \centering
    \includegraphics[width=1\linewidth]{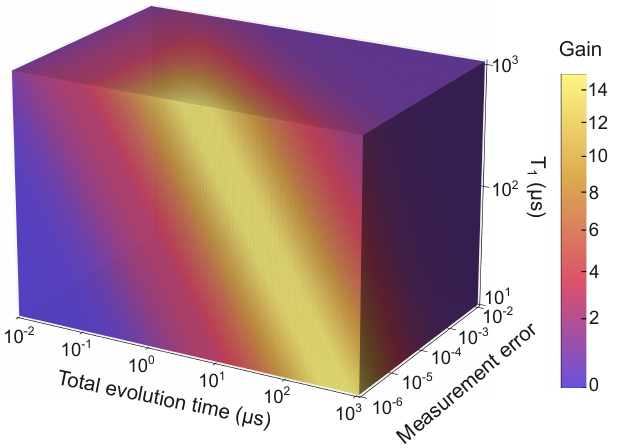}
    \caption{\textbf{Gain estimate on future quantum computers}. The 3D figure shows a simulation of the gain as a function of $T_{1}$, measurement error and delay for $\ket{1_{L}}$. We set $T_{2} = 2T_{1} \mu s$ throughout the process.}
    \label{fig:Gain_3D}
\end{figure}

\section{\label{sec:level4} Summary and Outlook} 

Our work represents an important development in the domain of quantum error correction (QEC), by demonstrating that it is possible to protect against the native, dominant sources of noise on the physical qubits -- amplitude damping and crosstalk -- on a publicly accessible quantum processor and demonstrate multiple logical qubits with break-even fidelities. 

To this end, we employ a noise-adapted, probabilistic QEC scheme that uses only five qubits to protect against single-qubit relaxation -- $3$ physical qubits to encode a single logical qubit, along with $2$ ancillary qubits for syndrome extraction and recovery. In principle, the protocol could be implemented with just one ancilla for both syndrome extraction and recovery, reducing the total qubit count to $4$, but the present choice is motivated by greater conceptual clarity and ease of circuit implementation.
We have incorporated variational quantum circuits to design hardware-efficient encoder and recovery circuits. We have further reduced the complexity of our recovery by using standard decomposition techniques and approximating it without hampering it's ability to correct for first order AD errors.

Due to the low circuit depths of our encoding and recovery circuits, we are able to perform multi-cycle QEC on the hardware and show significant enhancement of the $T_{1}$ lifetimes of multiple logical qubits. To tackle crosstalk, we concatenate our protocol with dynamical decoupling (CHaDD) and validate the methodology on the logical $\ket{+_{L}}$ state. Despite some early evidence of the benefits of combining QEC with dynamical decoupling (DD)~\cite{Boulant_2002_Experimental_QEC_Decoupling_Concatenation}, the question of whether and how DD helps in QEC implementations remains an active area of research~\cite{QEC+DD_Lidar}. Our analysis and experimental results clearly highlight for the first time the pros and cons of combining DD with a nosie-adapted QEC scheme tailored for damping noise. 

Finally, to quantify the advantage of the QEC protocol, we define a measure of gain and show that break-even performance is indeed possible on the currently available NISQ devices, despite the inherent probabilistic nature of the protocol. On the positive side, our protocol is robust to gate error buildup and is limited only by the error in the measurement. Fig. (\ref{fig:Gain_3D}) predicts the gain on future quantum computers. As the measurement errors reduce, we expect to observe significant gain in performance. 

One natural next step is to demonstrate logical gates using the $3$-qubit code. The existence of a transversal non-Clifford logical ($T$) gate is an interesting feature of this code~\cite{Dutta_2024}. An immediate question is whether this can be used to demonstrate non-trivial logical circuits that are resilient to the native noise in the hardware, and thus provide a pathway to demonstrating quantum advantage in the NISQ-era. 

We can also further improve the post-selection procedure by using field programmable gate arrays (FPGAs). This could enable us to terminate the unsuccessful instances during runtime and re-initialize the protocol, thus  reducing the time required for per shot. To maintain uniform experiment times this would increase the total number of shots thereby improving the SNR of the QEC protocol. 

Our work clearly highlights the mature state of quantum hardware today, in that it is possible to run a tailored, resource-efficient QEC protocol via remote access on publicly available quantum machines, in a faithful manner. Having addressed the dominant noise in the hardware, an important future direction is to explore whether such a noise-adapted QEC protocol can be concatenated with general-purpose schemes, relying on code switching strategies, to realize robust, fault-tolerant quantum computers in the future.
\\

\begin{acknowledgments}
This research was supported in part by a grant from the Mphasis F1 Foundation to the Centre for Quantum Information, Communication, and Computing (CQuICC). AR and PM acknowledge funding from the National Quantum Mission, Department of Science and Technology, Gov of India, via grant no. DST/QTC/NQM/QC/2024/1. We acknowledge the use of IBM Quantum in this work. The views expressed were
those of the authors and do not reflect the official policy or position of
IBM or the IBM Quantum team.
\end{acknowledgments}

\section*{Data Availability} \label{data}
The data and the code required to recreate the results in the manuscript are openly available in the link: \href{https://github.com/notSAKRI/Probabilistic-Quantum-Error-Correction/tree/main}{https://github.com/notSAKRI/Probabilistic-Quantum-Error-Correction}.

\bibliography{apssamp}

\newpage

\appendix

\begin{widetext}

\section{State Fidelity and success probability of the $3$-qubit code under damping and dephasing}\label{sec: appA}
\begin{figure}[h!]
    \centering
    \includegraphics[width=0.6\linewidth]{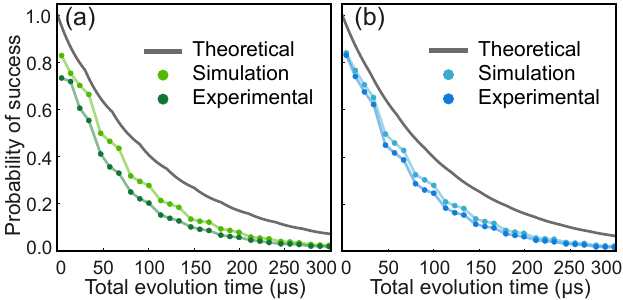}
    \caption{\textbf{The success probability of the multi-QEC protocol.} The green (blue) dots represent the success probability of $\ket{0_{L}}$ ($\ket{1_{L}}$), with the grey solid line being the analytically calculated probability. This comprises of the theoretical, noisy simulation (without crosstalk), and the experimental success probability.}
    \label{fig:app_prob_success}
\end{figure}
Here we write down the analytical expressions for the logical fidelities achieved by the $3$-qubit code, under the combined effect of damping and dephasing. The fidelity expressions for $|0_{L}|\rangle$,  $|1_{L}\rangle$ and $|\pm_{L}\rangle$ have already been given in Sec.~\ref{sec:dephasing}. The fidelity for an arbitrary logical state $|\psi_{L}\rangle$ can be calculated to be, 

 \begin{align}
   F_{\ket{\psi_{L}}}^{ideal} &= 1-\left(\frac{4}{3} \sin ^2\theta  \right) p -\left(\frac{4}{9} \sin ^2\theta (4 \cos \theta +1)\right) p^2 + \left(\frac{1}{3} (2 \cos \theta -1) \sin^2\theta \right) p \gamma 
   + \left(\frac{1}{8} (3 \cos \theta -5) \sin ^2\frac{\theta }{2} \right) \gamma^2 + \text{h.o.}  \\
&= 1 - \left(\frac{(2T_{1} - T_{2})\text{sin}^{2}\theta}{3 T_{1} T_{2}}\right)t - \left(\frac{(45T_{2}^{2} + (32 T_{1}^{2} - 56 T_{1}T_{2} - 7 T_{2}^{2})\text{cos}\theta - 4(2T_{1} - T_{2}) (-4T_{1} + 5T_{2})\text{cos}2\theta) \text{sin}^{2}\frac{\theta}{2}}{72 T_{1}^{2} T_{2}^{2}} \right) t^{2} + \cO(t^{3})
\end{align}

These expressions get modified as follows when we use the approximate recovery instead of the ideal recovery.  
    \begin{align}
        F_{\ket{\psi_{L}}}^{approx} &= 1-\left(\frac{4}{3} \sin ^2\theta\right)p - \left( \frac{4}{9} \sin ^2\theta (4 \cos \theta +1)\right)  p^2 + \left(\frac{4}{3}  \cos \theta  \sin^2 \theta \right) p\gamma + \left(\frac{1}{2}  (\cos \theta - 1)\right) \gamma ^2 + \text{h.o.} \\
        &= 1 - \left(\frac{(2T_{1} - T_{2})\text{sin}^{2}\theta}{3 T_{1} T_{2}} \right)t - \left(\frac{(6T_{2} (-T_{1} + 2T_{2}) + (2T_{1} - T_{2}) ((2T_{1} - 7T_{2})\text{cos}\theta + 2 (T_{1} - 2T_{2}) \text{cos}2\theta)) \text{sin}^{2}\frac{\theta}{2}}{9 T_{1}^{2} T_{2}^{2}} \right) t^{2}+ \cO(t^{3})
    \end{align}

Since $T_2 \leq 2T_1$, the coefficient in-front of time, $t$, is always positive \cite{Krantz_2019_quantum_engineer}. In IBM systems, the qubits' $T_{2}$ lifetimes are $T_{1}$ limited so the coefficient is very small \cite{Gupta_2025}.

As the recovery is subjected to post-selection, there is a finite probability that the protocol will not be successful, and we have to discard the outcome.  When subjected to amplitude-damping noise and the dephasing noise, the probability of successful recovery of the three-qubit code is given by
\begin{align}
p_{success} &=   (1 - \gamma)^{2} \left(1 + \gamma^2 \sin^2 \frac{\theta}{2} + \frac{8}{3}p(p - 1)(1 - \gamma) \cos^2\frac{\theta}{2} \right).
\end{align}

Here $\gamma$ and $p$ are the strength of the AD and dephasing noise. The $\ket{0_{L}}$ state gives the lower bound of the success probability with the expression
\begin{align}
p_{\text{success}} = (1-\gamma)^2 \left(1 + \frac{8p}{3}\left(\gamma + p - 1 - \gamma p \right)\right).
\end{align}
For multiple rounds of quantum error correction, the final state will be error corrected if all the instances of recovery are successful. 

Fig. \ref{fig:app_prob_success} shows the success probability of multi-round QEC with total evolution time, where the recovery is applied after a maximum delay of $30 \mu s$ interval. The theoretical curve accounts for noise affecting only the qubits themselves, while the simulation curve includes all qubit noise, gate noise, and measurement noise. The experimental curve includes crosstalk and other uncertainties, which are difficult to model.

\begin{figure*}[h!]
    \centering
    \includegraphics[width=1.0\linewidth]{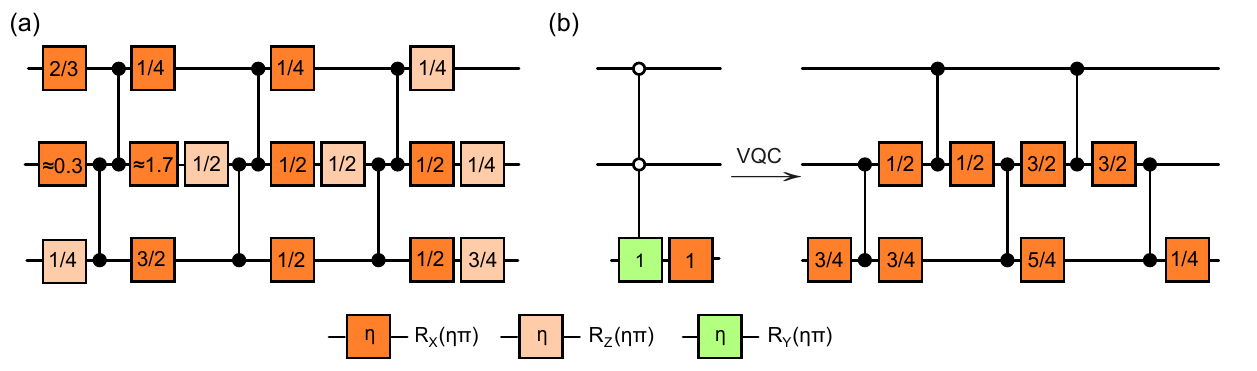}
    \caption{\textbf{Explicit circuits for the components of the recovery}. (a) The transpiled circuit diagram for $U$ as mentioned in Fig. \ref{fig:2}. (b) Transpiled circuit diagram of the last three qubits of the four-qubit unitary $D$, as mentioned in Fig. \ref{fig:2}. As the first qubit does not undergo any gate operations, we have not shown it in the figure. $R_X(\theta)$ is the orange gate, $R_Z(\theta)$ is the peach gate, and $R_Y(\theta)$ is the light green gate. The displayed value in the box, $\eta$, represents the optimized rotational angle as multiples of $\pi$.
    }
    \label{fig:app1}
\end{figure*}

\section{Decomposition of recovery circuit}\label{sec: appB}

In our work, we use the `hardware-efficient' ansatz \cite{Hardware-efficient_ansatz_Kandala2017} to construct the parametrized quantum circuit used in the variational optimization procedure. 
We use IBM Torino whose native gate set is given by $\{CZ, R_X(\theta), R_Z(\theta), R_{ZZ}(\theta), \sqrt{X}, X\}$ and the identity gate; our ansatz is constructed exclusively from these gates.
The ansatz structure consists of alternating layers of parameterized single-qubit rotations and native hardware entangling gates. Additionally, they are arranged while keeping the hardware's connectivity in mind. This structure minimizes circuit depth and gate overhead while preserving sufficient expressibility.
In Fig. \ref{fig:2} (c), we show the quantum circuit for our approximate recovery operation, which includes the gates $V$, $U$, and $D$. We use VQC to find the circuit with the minimum number of two-qubit gates, as shown in Fig. \ref{fig:app1}(a). 

The circuit for the unitary $D$ is shown in Fig. \ref{fig:app1}(b). It appears that the circuit consists of only three qubits, where $D$ is a four-qubit gate. The reason for this is that the first qubit of the four-qubit operation $D$ does not undergo any non-trivial operation under the approximation $\gamma=0$. Hence, $D$ effectively becomes a three-qubit operation which consists of only 5 two-qubit CZ gates.

\section{Optimizing the delay between consecutive QEC rounds}\label{appendix:C}

In the context of multi-QEC protocol, it's important to consider the time interval between successive QEC cycles. Applying QEC too frequently can significantly reduce the overall success probability and accumulate gate errors, while excessively long intervals lead to substantial fidelity loss as it exceeds the code's capability to correct. In addition, to reduce the circuit depth of our recovery operation, we use the approximation that $\gamma = 0$. For small values of $\gamma$, the fidelity of the approximate version is comparable to that of the ideal case. Thus, we need to select an appropriate time interval such that the effective $\gamma$ between two cycles is not too large.

To determine an appropriate interval, we implement the QEC protocol for the $\ket{1_L}$ state at different time intervals, $10 \mu s$, $30 \mu s$, $50 \mu s$, and $70 \mu s$ as shown in Fig. \ref{fig:app2}. As we can see, more frequent QEC cycles yield higher fidelity. However, performing QEC every $10\mu s$ results in a rapid decrease in the success probability. Therefore, we select a $30\mu s$ interval as a balanced choice, whose fidelity is slightly lower than that for $10\mu s$, but noticeably higher than that for $50\mu s$ and $70\mu s$. Longer delays like $50 \mu s$ and $70 \mu s$, result in larger values of $\gamma$ where our approximation then fails.

\begin{figure*}[h!]
    \centering
    \includegraphics[width=1.0\linewidth]{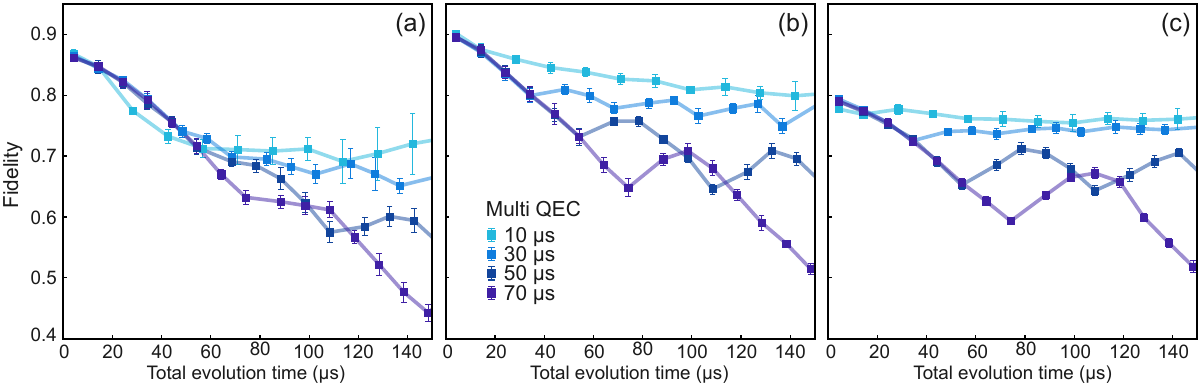}
    \caption{\textbf{Performance of multi-QEC protocol of logical $\ket{1_L}$, for different maximum delay time between QEC cycles, on IBM-Torino}. We can observe that performing QEC more times results in better fidelity, which is somewhat expected; however, it results in a lower probability of success.  After careful consideration, we chose a maximum delay time of $30 \mu s$, as it provides similar performance to $10 \mu s$ compared to the other two cases. Additionally, we can observe a decrease in the standard deviation of the data points, indicating that the probability of success is significantly better than $10 \mu s$. The multi-QEC protocol implemented here does not include the CHaDD scheme.}
    \label{fig:app2}
\end{figure*}

\section{Fidelity expression calculation}\label{Appendix: fidelity}
In this section, we formally go over the steps to show how the $3$-qubit QEC circuit in Fig.~\ref{fig:1} directly leads to an estimate of the fidelity for a given input state, after the final measurement of the logical qubits. We state the result for a general encoding and recovery scheme, since this method of calculating fidelity works for any QEC protocol.
\begin{lemma}[State Fidelity]
For a quantum error correction channel of the form, $\mathcal{N} = \mathcal{U}^{-1} \circ \mathcal{R} \circ \mathcal{E} \circ \mathcal{U}$, where $\mathcal{U}$ is some unitary state preparation and encoding map, $\mathcal{E}$ being the error channel, and $\mathcal{R}$ the corresponding recovery operation, the fidelity of the final state, $\sigma = \mathcal{N}(\rho)$, for some input state $\rho = \ket{0}\bra{0}^{\otimes (n + a)}$, where $n$ is the number of qubits which form the logical qubits and $a$ is the number of ancilla used to implement $\mathcal{R}$, is obtained by measuring the probability of the all zero outcome on the $n$ qubits conditioned on the successful implementation of $\mathcal{R}$.
\end{lemma}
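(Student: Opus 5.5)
The plan is to write the fidelity of interest explicitly and show that it coincides with the post-selected all-zero probability. Let $\ket{\psi_L} = U\ket{0}^{\otimes n}$, with $U$ the unitary implementing $\mathcal{U}$ (here $G$ followed by $\textrm{En}$). The quantity we want is the fidelity of the recovered, renormalized encoded state with $\ket{\psi_L}$,
\begin{equation*}
F = \frac{\bra{\psi_L}\,(\mathcal{R}\circ\mathcal{E})(\dyad{\psi_L})\,\ket{\psi_L}}{\mathrm{Tr}\,[(\mathcal{R}\circ\mathcal{E})(\dyad{\psi_L})]}.
\end{equation*}
The denominator equals $p_{success}$, because $\mathcal{R}$ is trace non-increasing. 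The first step is to model $\mathcal{R}$ by its block encoding. Adjoin the $a$ ancillas in $\ket{0}^{\otimes a}$, apply a unitary $W$ on the $n+a$ qubits, and keep only the ancilla outcome $\ket{0}^{\otimes a}$, taken here as the success label (in the paper the good outcomes are post-selected). With this convention, $R_k\otimes\bra{0}^{\otimes a}$-type blocks of $W$ give the Kraus operators, and
\begin{equation*}
\mathcal{R}(X) = \left(\idmatrix\otimes\bra{0}^{\otimes a}\right) W\left(X\otimes\dyad{0}^{\otimes a}\right)W^\dagger\left(\idmatrix\otimes\ket{0}^{\otimes a}\right),
\end{equation*}
with any syndrome-conditioned control absorbed into $W$ and the syndrome ancilla traced out.

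Next I would track the full circuit. The initial state is $\ket{0}^{\otimes(n+a)}$. Apply $U$ on the data, then the noise $\mathcal{E}$ with Kraus operators $E_j$, then $W$, then $U^\dagger$ on the data, and finally measure every qubit in the computational basis. The joint probability of all zeros on the data and the success outcome on the ancillas is
\begin{equation*}
\sum_j \left|\left(\bra{0}^{\otimes n}U^\dagger\otimes\bra{0}^{\otimes a}\right)W\left(E_jU\ket{0}^{\otimes n}\otimes\ket{0}^{\otimes a}\right)\right|^2 = \sum_{j,k}\left|\bra{\psi_L}R_kE_j\ket{\psi_L}\right|^2,
\end{equation*}
which is exactly $\bra{\psi_L}(\mathcal{R}\circ\mathcal{E})(\dyad{\psi_L})\ket{\psi_L}$. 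The final $U^\dagger$ acts only on the data qubits, so it commutes with the ancilla projection. The probability of the success outcome alone is $\mathrm{Tr}\,[(\mathcal{R}\circ\mathcal{E})(\dyad{\psi_L})]$, since $U^\dagger$ is unitary and therefore does not change the trace. Dividing the two by Bayes' rule gives the conditional all-zero probability equal to $F$. It also gives $\langle 0^{\otimes n}|\sigma|0^{\otimes n}\rangle$ for the normalized $\sigma=\mathcal{N}(\rho)$, since unitary invariance of the overlap gives $\bra{0}U^\dagger(\cdot)U\ket{0}=\bra{\psi_L}(\cdot)\ket{\psi_L}$.

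For multiple rounds, I would replace $\mathcal{R}\circ\mathcal{E}$ by the composition of the rounds, using fresh or reset ancillas each time, and condition on success in every round. The same computation goes through with a product of ancilla projections.

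The main obstacle is bookkeeping rather than mathematics. One must pin down the success label in the block encoding, and treat the syndrome ancilla correctly: it is controlled on coherently and never post-selected, so it must be traced out rather than projected. Otherwise the identification of the measured marginal with the Kraus sum fails. I would handle this by absorbing the syndrome ancilla into the environment of $W$ and checking that tracing it out reproduces $\{R_0,R_1\}$.
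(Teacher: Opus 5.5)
Your proposal is correct and follows essentially the same route as the paper. You push the encoded state through the noise and the syndrome-controlled, block-encoded recovery, trace out the syndrome ancilla, post-select the recovery ancilla, and identify the normalized overlap $\sum_{(l,k)}|\bra{\psi_L}R_lE_k\ket{\psi_L}|^2/\mathrm{Tr}[\cdots]$ with the all-zero entry after undoing $\tilde{U}_{en}=U_{en}G$; your Bayes'-rule step just makes explicit the conditioning that the paper reads off as the top-left block entry.

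The one repair needed is the one you already flagged. As displayed, your $\mathcal{R}(X)$ and joint-probability formula project all $a$ ancillas onto $\ket{0}^{\otimes a}$, which gives a single Kraus operator rather than the sum over $k$. You should instead sum over the syndrome outcomes $s$, with effective Kraus operators $(\idmatrix\otimes\bra{s}_{a_1}\bra{0}_{a_2})\,W\,(\idmatrix\otimes\ket{0}_{a_1}\ket{0}_{a_2})$. These reduce to $R_0,R_1$ because each $R_l$ is supported on the matching $Z_1Z_2Z_3$ parity sector.
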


\begin{proof}
We first encode our state by applying a unitary $U_{en}$ giving

\begin{equation*}
    \rho = U_{en}\rho_{Q}U_{en}^{\dagger} \otimes \ket{00}\bra{00}_{a_{1}a_{2}} 
\end{equation*}

The qubits in $\rho_{Q} = \left(\ket{\psi00}\bra{\psi00}\right)_{Q}$ will form our logical qubit protecting $\ket{\psi}$, and $\{a_{1}, a_{2}\}$ are the ancillae used to perform the syndrome and recovery respectively. Now, the AD noise acts on data qubits via a \textit{completely positive trace preserving} (CPTP) map $\mathcal{E}$, defined via a set of Kraus operators $\{E_{k}\}$. We apply recovery operations via another CPTP map with Kraus operators $\{R_{l}\}$. The density matrix after the action of $\mathcal{E}$ is

    \begin{align}
    \rho_{noisy} = \mathcal{E} (\rho) &= \sum_{k} (E_{k} \otimes \idmatrix_{2 \times 2}^{\otimes 2})(\rho)(E_{k}^{\dagger} \otimes \idmatrix_{2 \times 2}^{\otimes 2}) = \sum_{k} E_{k}U_{en}\rho_{Q}U_{en}^{\dagger}E_{k}^{\dagger} \otimes \ket{00}\bra{00}_{a_{1}a_{2}}
\end{align}where, $\idmatrix_{2 \times 2}$ is the identity matrix corresponding to a qubit. Now, we apply the $\mathcal{R}$ map. It consists of the syndrome measurement and the recovery operation controlled on the syndrome outcomes. In the syndrome step, we are performing the $ZZZ$ measurement on our $\rho_{noisy}$. The circuit involves applying \textit{CNOT} gates, where the control is on the data qubits and the target is the syndrome qubit. This gives us the density matrix as

\begin{align}
    \rho_{syndrome} &= \sum_{k_{1}} E_{k_{1}}U_{en}\rho_{Q}U_{en}^{\dagger}E_{k_{1}}^{\dagger} \otimes \ket{1}\bra{1}_{a_{1}} \otimes \ket{0}\bra{0}_{a_{2}} + \sum_{k_{2}} E_{k_{2}}U_{en}\rho_{Q}U_{en}^{\dagger}E_{k_{2}}^{\dagger} \otimes \ket{0}\bra{0}_{a_{1}} \otimes \ket{0}\bra{0}_{a_{2}}
\end{align}

Here, $k_{1}$ captures the no damping and double damping errors, and $k_{2}$ captures the cases of single damping and triple damping on the data qubits respectively. Now, we apply the non-unitary recovery operations, which gives the output as

    \begin{align}
    \rho_{recovered} &= \sum_{k_{1}} \Bigg[R_{0}E_{k_{1}}U_{en}\rho_{Q}U_{en}^{\dagger}E_{k_{1}}^{\dagger}R_{0}^{\dagger} \otimes \ket{1}\bra{1}_{a_{1}} \otimes \ket{0}\bra{0}_{a_{2}} + \bigg(\sqrt{\idmatrix - R_{0}^{\dagger}R_{0}}\bigg)E_{k_{1}}U_{en}\rho_{Q}U_{en}^{\dagger}E_{k_{1}}^{\dagger}\left(\sqrt{\idmatrix - R_{0}^{\dagger}R_{0}}\right)^{\dagger} \otimes \ket{1}\bra{1}_{a_{1}}  \nonumber \\
    & \otimes \ket{1}\bra{1}_{a_{2}}\Bigg] + \sum_{k_{2}} \Bigg[R_{1}E_{k_{2}}U_{en}\rho_{Q}U_{en}^{\dagger}E_{k_{2}}^{\dagger}R_{1}^{\dagger} \otimes \ket{0}\bra{0}_{a_{1}} \otimes \ket{0}\bra{0}_{a_{2}} + \bigg(\sqrt{\idmatrix - R_{1}^{\dagger}R_{1}}\bigg)E_{k_{2}}U_{en}\rho_{Q}U_{en}^{\dagger}E_{k_{2}}^{\dagger}\left(\sqrt{\idmatrix - R_{1}^{\dagger}R_{1}}\right)^{\dagger} \nonumber\\
    &\otimes \ket{0}\bra{0}_{a_{1}} \otimes \ket{1}\bra{1}_{a_{2}}\Bigg]
    \end{align}

After tracing out the syndrome qubits, we get,
 
    \begin{align}
    \tilde{\rho}_{recovered} &= \text{Tr}_{a_{1}}\left[\rho_{recovered}\right] \nonumber \\ 
    &= \left[\sum_{k_{1}} R_{0}E_{k_{1}}U_{en}\rho_{Q}U_{en}^{\dagger}E_{k_{1}}^{\dagger}R_{0}^{\dagger} + \sum_{k_{2}} R_{1}E_{k_{2}}U_{en}\rho_{Q}U_{en}^{\dagger}E_{k_{2}}^{\dagger}R_{1}^{\dagger} \right] \otimes \ket{0}\bra{0}_{a_{2}}  + \Bigg[\sum_{k_{1}} \bigg(\sqrt{\idmatrix - R_{0}^{\dagger}R_{0}}\bigg)E_{k_{1}}U_{en}\rho_{Q}U_{en}^{\dagger}E_{k_{1}}^{\dagger}\nonumber \\
    &\left(\sqrt{\idmatrix - R_{0}^{\dagger}R_{0}}\right)^{\dagger} + \sum_{k_{2}} \left(\sqrt{\idmatrix - R_{1}^{\dagger}R_{1}}\right)E_{k_{2}}U_{en}\rho_{Q}U_{en}^{\dagger}E_{k_{2}}^{\dagger}\left(\sqrt{\idmatrix - R_{1}^{\dagger}R_{1}}\right)^{\dagger} \Bigg] \otimes \ket{1}\bra{1}_{a_{2}}
\end{align} 
The next step is the post-selection step. Here, our protocol is successful for the $\ket{0}$ output of ancilla $a_{2}$. So, our final density matrix, after normalization, is,
 
    \begin{align}
    \tilde{\sigma} = \mathcal{R} \circ \mathcal{E}(\rho) &= \frac{\sum_{k_{1}} R_{0}E_{k_{1}}U_{en}\rho_{Q}U_{en}^{\dagger}E_{k_{1}}^{\dagger}R_{0}^{\dagger} + \sum_{k_{2}} R_{1}E_{k_{2}}U_{en}\rho_{Q}U_{en}^{\dagger}E_{k_{2}}^{\dagger}R_{1}^{\dagger}}{\text{Tr}\left[\sum_{k_{1}} R_{0}E_{k_{1}}U_{en}\rho_{Q}U_{en}^{\dagger}E_{k_{1}}^{\dagger}R_{0}^{\dagger} + \sum_{k_{2}} R_{1}E_{k_{2}}U_{en}\rho_{Q}U_{en}^{\dagger}E_{k_{2}}^{\dagger}R_{1}^{\dagger}\right]} \\ \nonumber
    &= \frac{\sum_{(l, k) \in W}R_{l}E_{k}U_{en}\rho_{Q}U_{en}^{\dagger}E_{k}^{\dagger}R_{l}^{\dagger}}{\text{Tr}\left[\sum_{(l, k) \in W}R_{l}E_{k}U_{en}\rho_{Q}U_{en}^{\dagger}E_{k}^{\dagger}R_{l}^{\dagger}\right]}
    \end{align} 

The set $W = \{(l=0, k=k_{1})\} \bigcup \{(l=1, k=k_{2})\}$ represents pairs of labels of the Kraus operators of the recovery and error channels, respectively. The state fidelity becomes,

    \begin{align}\label{fidelity_expression}
    F^{2} &= \text{Tr}\left[\left(U_{en}\rho_{Q}U_{en}^{\dagger}\right)\tilde{\sigma} \right] \\\nonumber
    &= \frac{\text{Tr}\left[\left(U_{en}\rho_{Q}U_{en}^{\dagger}\right) \left( \sum_{(l, k) \in W}R_{l}E_{k}U_{en}\rho_{Q}U_{en}^{\dagger}E_{k}^{\dagger}R_{l}^{\dagger} \right)\right]}{\text{Tr}\left[\sum_{(l, k) \in W}R_{l}E_{k}U_{en}\rho_{Q}U_{en}^{\dagger}E_{k}^{\dagger}R_{l}^{\dagger}\right]} \\ \nonumber
    &= \frac{\sum_{(l, k) \in W}|\bra{0}^{\otimes 3}G^{\dagger}U_{en}^{\dagger}R_{l}E_{k}U_{en}G\ket{0}^{\otimes 3}|^{2} }{\text{Tr}\left[\sum_{(l, k) \in W}R_{l}E_{k}U_{en}\rho_{Q}U_{en}^{\dagger}E_{k}^{\dagger}R_{l}^{\dagger}\right]} \; = \;  \frac{\sum_{(l, k) \in W} |\bra{0}^{\otimes 3}\tilde{U}_{en}^{\dagger}R_{l}E_{k}\tilde{U}_{en}\ket{0}^{\otimes 3}|^{2}}{\text{Tr}\left[\sum_{(l, k) \in W}R_{l}E_{k}U_{en}\rho_{Q}U_{en}^{\dagger}E_{k}^{\dagger}R_{l}^{\dagger}\right]}, \\ \nonumber
    \end{align}
    where, $\ket{\psi} = G\ket{0}$ is prepared using unitary $G$, and $\tilde{U}_{en} = U_{en}G$. The final structure of the density matrix after $U_{en}^{\dagger}$ acts on $\tilde{\sigma}$, in block form, is,
\begin{align}
    \tilde{U}_{en}^{\dagger}\tilde{\sigma}\tilde{U}_{en} = \begin{pmatrix}
        F^{2} & * \\
        * & *
    \end{pmatrix}
\end{align}
In the top-left, $F^{2}$, represents the fidelity expression, Eq. (\ref{fidelity_expression}), corresponding to the all zero outcome of the circuit. It is clear that the above fidelity calculation protocol can be generalized to any quantum error correction scheme.
\end{proof}
\end{widetext}
\end{document}